\numberwithin{equation}{section}
\theoremstyle{plain}
\newtheorem{theorem}{Theorem}[section]
\newtheorem{proposition}[theorem]{Proposition}
\newtheorem{lemma}[theorem]{Lemma}
\newtheorem{conjecture}[theorem]{Conjecture}
\newtheorem{question}[theorem]{Question}
\newcommand\R{{I\!\!R}}
\DeclareFontFamily{OT1}{rsfs}{}
\DeclareFontShape{OT1}{rsfs}{n}{it}{<-> rsfs10}{}
\DeclareMathAlphabet{\mathscr}{OT1}{rsfs}{n}{it}
\begin{document}

\title{Some open questions in hydrodynamics}

\author{Mateusz Dyndal}
\address{AGH Univ. of Science and Technology, Cracow, Poland \\
         CEA Saclay, Irfu/SPP, Gif-sur-Yvette, France}
\email{mateusz.dyndal@cern.ch}

\author{Laurent SCHOEFFEL}
\address{CEA Saclay, Irfu/SPP, Gif-sur-Yvette, France}
\email{laurent.olivier.schoeffel@cern.ch}

\begin{abstract}  
When speaking of  unsolved problems in physics, this is surprising at first glance to discuss the case of 
fluid mechanics. However, there are many deep open questions that come with the theory of fluid mechanics.
In this paper,
we discuss some of them that we classify in two categories, the long term behavior of solutions of equations of hydrodynamics and
the definition of initial (boundary) conditions. The first set of questions come with the non-relativistic theory
based on the Navier-Stokes equations. Starting from smooth initial conditions, the purpose is to understand if solutions of 
Navier-Stokes equations remain smooth with the time evolution.
 Existence for
just a finite time would imply the evolution of finite time singularities, which would have
a major influence on the development of turbulent phenomena. 
The second set of questions come with the relativistic theory of hydrodynamics.
There is an accumulating evidence that this theory may be relevant for the description of the medium 
created in high energy heavy-ion collisions. However, this is not clear that the fundamental hypotheses of
hydrodynamics are valid in this context. Also, the determination of initial conditions
remains questionable. 
The purpose of this paper is to explore some ideas related to these questions, both in the
non-relativistic and relativistic limits of fluid mechanics.
We believe that these ideas do not concern only the theory side but can also be useful for interpreting results from experimental measurements.

\end{abstract}

\maketitle

\section{Introduction}

Equations of fluid mechanics are widely used by engineers and physicists with a great deal of success to describe 
  many physical phenomena in nature. For example,
they can be used to model large scale systems like the coupled atmospheric and ocean flows used by
the meteorological office for weather prediction down to small scale systems in chemical
engineering. 
In the non-relativistic limit, 
the derivation of these equations for ideal fluids dates back to  Euler (1755), and
 for a viscous fluid to Navier (1822) and Stokes (1845).
Similarly, for relativistic motions, equations of fluid mechanics have 
become of increasing importance in the context of high-energy
physics. This dates back to the seminal works of Fermi and Landau (1950) \cite{landau}, dedicated to high-energy collisions  
producing  many hadrons of different
sorts going into all directions. 
 With the advent of heavy-ion collisions,  the interest in relativistic fluid mechanics has been recently
revived after the observation of several phenomena 
which suggested that the matter produced in these collisions could behave
collectively, like a fluid \cite{Heinz:2001xi,Gyulassy:2004vg}. 

Despite the great success of hydrodynamics over history to represent and simulate fluid behaviors, 
in the non-relativistic limit, we do not
know yet if this description meets fundamental mathematical criteria. 
In the following we define a smooth function as  infinitely differentiable on $\R^3$, square integrable
and with strong decay properties at infinity for the function itself and its derivatives.
In particular, the question is still open on
whether solutions to equations of fluid mechanics in $\R^3$ could develop singularities or not, under smooth 
divergence free initial conditions.
The answer to this important question is
recognized as one of the Millennium prize problems
\cite{fefferman:2006,ladyzhenskaya:2003}.
If they do, our description of small scale flows must be missing essential
physics.
The purpose of this paper is to introduce the underlying mathematical problem in simple and
 precise terms. 
We begin by introducing the incompressible Navier-Stokes  equations and then 
define a few open  questions concerning the global regularity of solutions.
When discussing these questions,  our purpose to to give pedagogical arguments that are readable even by non experts 
but that nevertheless contain all the ideas.
In the same way, we present some selected key open questions in relativistic hydrodynamics.
For this part, we focus the discussion on what could happen during high energetic collisions of heavy ions, as this
has been experienced at RHIC and currently at the LHC.
A summary of the experimental situation can be found in \cite{Huovinen:2013wma}.
Our purpose here is not to cover all aspects of this physics but to focus on the applicability of
hydrodynamics in this context. Fluid mechanics is rooted on fundamental hypotheses. If their validity is not debatable for
standard non-relativistic flows, this could be the case for relativistic flows possibly generated in
heavy-ion collisions. In particular, when a very large number of elementary constituents are produced locally at
high energy, this is unclear whether pure statistical and geometrical descriptions could be sufficient
or not in describing the possible outcomes of experiments.

Obviously, there are some links between the two parts of this paper, related to non-relativistic and relativistic limits of
hydrodynamics.  This is also our purpose to describe these links by a global presentation of hydrodynamics.
For example, the role of viscosity is essential, maybe for damping singularities that could arise from 
ideal fluid mechanics equations. Viscosity is  a measure of the ability of a fluid to return to its local thermodynamic
equilibrium. Then, when strong interactions among 
elementary constituents need to be considered, we can expect small relaxation times and thus small viscosity effects.
This can drive some parallel thinking between the two cases, relativistic or not.

\section{First hypotheses}
\label{hyp}

The formalism of fluid mechanics requires thermodynamics because a large part of the information on the flow is encoded
in its thermodynamic properties, trhough its equation of state. This relies on  two
assumptions: the continuum hypothesis and the  local thermodynamic
equilibrium. No other assumption is made concerning the nature of the
particles and fields, their interactions, the classical or quantum nature
of the phenomena involved. 

Indeed, hydrodynamics, relativistic or not, is supposed to describe motions of fluids and related phenomena at macroscopic scales, which
assumes that a fluid can be regarded as a continuous medium. 
This is the statement of the continuum hypothesis.
This means that any small volume element in
the fluid is always supposed so large that it still contains a very great number of elementary constituents. Accordingly, when
we consider infinitely small elements of volume, we mean very small compared with the volume of the system
under consideration, but large compared with the distances between the elementary constituents. The expressions fluid
particle and point in a fluid are to be understood in this sense. 

Following the continuous assumption, the mathematical description of the state of a moving fluid element can be
characterized by functions of the coordinates $x, y, z$ and of the time $t$. These functions of $(x, y, z, t)$ are related
to the quantities defined for the fluid at a given point $(x, y, z)$ in space and at a given time $t$, which refers to
fixed points in space and not to fixed particles of the fluid. This is called the Eulerian description of the fluid flow, which we will commonly use in writing equations hereafter. In addition, we always assume that the fluid is close to 
local thermodynamic equilibrium, which
 means that its thermodynamic properties are varying slowly from point to point.
 This requires  that local relaxation times towards thermal equilibrium are much shorter than any
macroscopic dynamical time scale. In particular, microscopic collision time scale (between
elementary constituents of the fluid) needs to be much shorter than any macroscopic evolution time
scales. 
This hypothesis is almost a tautology for non-relativistic fluids, 
but becomes not trivial in the relativistic case.

We recall the expression of the differential of the 
internal energy $U$ for the system under consideration, of mass $m$ and volume $V$ 
\begin{equation}
dU=-PdV+TdS+\mu dN,
\end{equation}
where $P$ is the pressure, 
$S$ is the entropy, $T$ the temperature and $\mu$ the chemical potential. 
In non-relativistic systems, $N$ is generally the number of  molecules.
This term vanishes as the chemical reactions in the
fluid element are frozen with respect to the dynamical time scale of the flow. There
is a local thermodynamic equilibrium for all species involved in the chemical reactions. 
  In relativistic 
systems, the number of elementary particles is not conserved as it is always possible
to create a particle-antiparticle pair, provided energy is available. Therefore, in this case, $N$ does not
represent a number of particles, but a conserved quantity, such as the baryon number. 

In the following sections, it will be useful to introduce 
densities per unit volume or per unit mass. 
We note energy density per unit volume $\epsilon= U/V$, per unit mass $\epsilon_m= U/m$ and the entropy 
density per unit volume $s= S/V$, per unit mass $s_m= S/m$. The mass density can then be written as $\rho=m/V$.

\section{Equations of Non-relativistic  fluid mechanics}
\label{nonrel}

The mathematical description of the state of a moving fluid is effected by its velocity field $\vec{v}(x,y,z,t)$
and any two thermodynamic quantities pertaining to the fluid, for example the pressure 
$P(x,y,z,t)$and the mass density $\rho(x,y,z,t)$. 
The velocity field $\vec{v}$ is a vector with three components $(v_x, v_y, v_z)$ each of which may be functions of $(x,y,z,t)$.
Note that thermodynamic quantities are determined by the value of two of them with the equation of state. Hence, with five quantities, the mathematical state of the moving fluid is completely determined.

We first recall the equations relating these variables in the non-relativistic approximation. These equations are related to  three conservation principles: (1) the conservation of mass, (2) the balance of momentum or the Newton's second law of mechanics and (3) the conservation of energy.
For ideal fluids, these conservation laws read in the absence of external force
\begin{equation}
\frac{\partial\rho }{\partial t}+\vec\nabla(\rho \vec v)=0 ;
\label{mass}
\end{equation}
\begin{equation}
	\frac{\partial}{\partial t} \vec{v} + \left( \vec{v}\cdot\nabla \right) \vec{v}  = -\frac{1}{\rho} \nabla P ;
\label{euler}
\end{equation}
\begin{equation}
\frac{\partial}{\partial t} S + \left(\vec{v}\cdot\nabla \right) S  =0.
\label{isos}
\end{equation}
For simplicity of the notations, we have not written explicitly the dependencies in
$(x,y,z,t)$ for all functions in the above relations.
Equation (\ref{mass}) is called the continuity equation.
Equations (\ref{euler}) correspond to three partial differential equations for the three components of the velocity field. They are called the Euler's equations (for ideal fluids).
The hypothesis behind ideal fluids is that each particle pushes its neighbors equally in every
direction. This is why a single scalar quantity, the pressure, is sufficient to describe the force per unit area
that a particle exerts on all its neighbors at a given time. Then, the acceleration of the fluid particle results
from the pressure differences. This is what Euler's equations (\ref{euler}) describes mathematically.
Equation (\ref{isos}) represents the conservation of entropy of the fluid throughout its evolution in time.
Indeed, for ideal fluids, we neglect all processes related to energy dissipation, which may occur in a moving fluid
as a consequence of internal friction (viscosity)  as well as heat exchange between different
parts of the fluid.
Therefore, the motion of an ideal fluid is by definition considered as adiabatic. Consequently, the entropy of any fluid particle remains constant as that particle moves in space inside the fluid.
This condition usually takes a
much simpler form. As it usually happens, if the entropy is constant throughout some volume element of
the fluid at some initial time, then it retains the same constant value everywhere in the fluid volume, at all
times for any subsequent motion of the fluid. In this case, the conservation of entropy (\ref{isos}) reads
$$
S=constant.
$$
This is the condition for an isentropic motion.
Note that the variation enthalpy per unit mass ($h_m=H/m$) for isentropic motion reads
$$
dh_m=Tds_m+\frac{1}{\rho}dP=\frac{1}{\rho}dP.
$$
Then, equations (\ref{euler}) become
\begin{equation}
	\frac{\partial}{\partial t} \vec{v} + \left( \vec{v}\cdot\nabla \right) \vec{v}  = - \nabla h_m.
\label{euler2}
\end{equation}
The continuity equation and the Euler's equations,
together with the equation of state  $P=P(\rho,s_m)$ form a system of five equations  for the set of five variables: the mathematical system is closed.

For non-relativistic motion, a reasonable approximation is to consider a constant mass density
$\rho = \rho_0$.
It is called the incompressibility condition, well verified for subsonic flows, for which 
the  velocity field is  much smaller in magnitude than the sound speed in the fluid.
In the following, we keep this approximation in the non-relativistic limit. Then,
the continuity equation (\ref{mass}) reads
\begin{equation}
\nabla\cdot \vec{v} = 0.
\label{incomp}
\end{equation}

For real fluid, for which energy dissipation in the fluid is not neglected any longer, the viscosity or internal
friction must be taken into account. A modification of the right hand side of Euler's equations (\ref{euler}) is thus needed and these equations need to be upgraded to the 
Navier-Stokes partial differential equations
\begin{equation}
	\frac{\partial}{\partial t} \vec{v} + \left(\vec{v}\cdot\nabla\right)\vec{v}  = -\frac{1}{\rho_0} \nabla P +\frac{\eta}{\rho_0} \Delta \vec{v}.
\label{ns}
\end{equation}
Here again, we have assumed the absence of external force.
The positive coefficient $\eta$ is called the dynamic viscosity, 
describing the quality of the fluid, while $\nu=\frac{\eta}{\rho_0}$ is called the kinematic viscosity.
In general the viscosity coefficient is a function of pressure and temperature of the fluid. As pressure
and temperature may not be constant throughout the fluid, the viscosity coefficient also may not be
constant throughout the fluid. 
Equations (\ref{ns}) complement Euler's equations (\ref{euler}) with the presence of a diffusive term in
${\nu} \Delta \vec{v}$. The convective term in $\left(\vec{v}\cdot\nabla\right)\vec{v}$ is not affected by
the internal friction phenomenon.
The idea behind the diffusive term is that when energy dissipation in the fluid is not neglected, the viscosity or internal
friction is due to an irreversible transfer of momentum from points where the velocity is large to those
where it is small. This means that if a fluid particle moves faster than the average of its neighbors, then friction
slows it down. 
Knowing 
the
mathematical fact that the deviation of a function $f$ at a point from its average value on small
surrounding spheres is measured by $-\Delta f$, we can conclude immediately
that frictional (vector) forces must be proportional to $\Delta \vec{v}$ at first approximation.
This gives a simple justification of the functional form of Navier-Stokes equations (\ref{ns}).

Interestingly, incompressible Navier-Stokes equations  can be rewritten using dimensionless variables, 
for a flow with 
characteristic scales for the velocity,  $U$, and for length, $L$. Then, after a proper rescaling of 
variables in 
 equation (\ref{ns}), it becomes
\begin{equation}
	\frac{\partial}{\partial t} \vec{v} + \left(\vec{v}\cdot\nabla\right)\vec{v}  = - \nabla P +\frac{1}{Re} \Delta \vec{v},
\label{nsrey}
\end{equation}
where all quantities are dimensionless and $Re=UL/\nu$ is called the Reynolds number. It defines a macroscopic (dimensionless) number corresponding to the ratio of the strength of the non-linear
effects to the strength of the linear viscous effects. 
This expression leads immediately to the law of similarity:
flows which can be obtained from one another by simply changing the unit of measurement of coordinates
and velocities are said similar. Therefore flows of the same type and same Reynolds numbers are similar.
 The most important comment to be stated here concerns the large $Re$ limit. As it can be seen from equation (\ref{nsrey}), when 
$Re \rightarrow \infty$, Navier-Stokes equations (\ref{ns}) tend to Euler's equations (\ref{euler}).
Then, we may think that we can neglect the viscous term the Navier-Stokes equations in
comparison with the convective term when the Reynolds number is sufficiently large (at small
viscosity). However, the Navier-Stokes equations correspond to a singular perturbation of the Euler
equations, since the viscosity multiplies the term that contains the highest-order spatial derivatives.
As a result, this is not always possible to operate this simplification. The high Reynolds number limit
of the Navier-Stokes equations is a very difficult problem, where turbulent effects may dominate the
dynamic of the flow.

Navier-Stokes equations need boundary conditions. Indeed, there are
always forces of molecular attraction between viscous fluid and the surface of a solid body, and these forces
have the result that the layer of fluid immediately adjacent to the surface is brought completely to rest, and
adheres to the surface. Then, if we consider the fluid inside some  domain $\Omega \subset \R^3$, then the fluid particles stick to the walls
of the domain $\partial \Omega$ and $\vec{v}(\vec{x},t)=\vec{0}$ for $\vec{x} \in \partial \Omega$. 

Incompressible Navier-Stokes equations are non-linear and non-local. The non-linearity is due to the convective term. Non-locality refers to the relationship between  velocity  and pressure. From equations (\ref{incomp})  and (\ref{ns}) it can easily be shown that
\begin{equation}
\Delta P = -\sum_{i,j=1}^{3} \frac{\partial v_i}{\partial x_j} \frac{\partial v_j}{\partial x_i}.
\label{deltap}
\end{equation}
Then, any change in the velocity field at a position x affects the pressure field
immediately everywhere else. This implies that sound
waves can carry any perturbations of the pressure field  instantaneously across the entire volume of the
fluid.
It is clear that  we need to understand the interplay between
the non-linear and diffusive terms in the
Navier-Stokes equations (\ref{ns}). Due to non-linearities in the equations, a singularity of the velocity field could arise 
as the result of the action of a flow onto
itself, even when the initial conditions are smooth and divergence free. However, 
friction, if large enough, could prevent the velocity from
becoming singular, if this happens to be the case.
Then, regarding  Navier-Stokes equations, we would need to ensure
(i) the existence of solutions,
the physical system must have a way to evolve into the future, (ii) uniqueness,
there must not be arbitrary choices for the evolution, and (iii) continuous
dependence on the initial state, any future state of the flow is determined, to arbitrary
finite precision, by the initial conditions.
For the incompressible Navier-Stokes equations, a complete
answer to these questions is open. This is also true for Euler's equations.
What is known is that both are locally
well-posed : solutions starting out from smooth initial
data are unique, depend continuously on the initial data, and remain smooth
for at least a finite, possibly short, interval of time.

During its evolution in time, the kinetic energy $K$  of the fluid 
needs also to remain finite. It can be expressed a function of time $t$ as
\begin{equation}
K(t) = \frac{1}{2} \int_{\Omega} \| \vec{v}(x,y,z,t) \|^2 dV,
\end{equation}
where $\| \vec{v} \|^2=\sum_{i=1}^{3} |v_i|^2$ represents the norm (squared) of 
the velocity vector field  and $dV$ is the volume element, $dV=dx dy dz$.
Also, we take the constant mass density $\rho_0$ to be unity and keep this convention for
the discussion that follows for non-relativistic fluid dynamics.
When there is no external source of energy, as assumed here in the absence of external force, the kinetic energy 
dissipates. Then, a fundamental relation holds
\begin{equation}
 \frac{1}{2} \int_{\Omega} \| \vec{v}(x,y,z,t) \|^2 dV + \nu \int_{0}^{t}\int_{\Omega} \| \nabla \vec{v}(x,y,z,s) \|^2 dV ds
\leq \frac{1}{2} \int_{\Omega} \| \vec{v}(x,y,z,0) \|^2 dV,
\label{v2}
\end{equation} 
with the notation 
\begin{equation}
\| \nabla \vec{v} \|^2=
\sum_{i,j=1}^{3} |\frac{\partial v_j}{\partial x_i} |^2.
\label{dv2}
\end{equation} 
The quantity $\nu \int_{0}^{t}\int_{\Omega} \| \nabla \vec{v}(x,y,z,s) \|^2 dV ds$ represents the
cumulative energy dissipation, which measures how much kinetic energy has been lost up to time $t$.
Solutions with finite kinetic energy and with finite average rate of dissipation of kinetic
energy should, in principle, exist forever and decay to zero. Unfortunately, the dissipation of
kinetic energy is the strongest quantitative information about incompressible Navier-Stokes equations that is presently known for general solutions. Then, the finiteness of $K(t)$ once we ensure the finiteness of $K(0)$ is an interesting property of 
the flow but can not prevent the divergence of the velocity field or its derivatives at a certain time $T^*$. In such case,
we call of blow-up at $T^*$. Following this discussion, some open questions come immediately.
\begin{question}
What are the most general conditions for smooth initial incompressible velocity field
$\vec{v}(\vec{x},0)$ that ensures that  smooth solutions of the  Navier-Stokes equations 
exist for $t \geq 0$? The same question holds for Euler's equations.
\label{nonrel1}
\end{question}
\begin{question}
Can we understand physically the mechanism that could generate finite time blow-up of solutions
of the Navier-Stokes equations starting from smooth initial incompressible velocity field?
\label{nonrel2}
\end{question}
\begin{question}
Following the previous questions, which properties of the velocity field 
would be needed to guarantee the existence of a (unique) global smooth solution
at all times, once the initial conditions are smooth?
\label{nonrel3}
\end{question}
Of course, all these questions are not new. They are heavily discussed in the mathematical literature.
Our purpose is to give pedagogical and precise arguments that can be read by anyone in order to
 get a better understanding of non-relativistic fluid mechanics.
This will be done in section \ref{regular}.
We will use extensively the short-hand notations, written below for $\Omega=\R^3$
\begin{eqnarray}
\| \vec{v} \|^2_{L^2(\R^3)}(t) &=& \int_{\R^3} \|  \vec{v}(x,y,z,t) \|^2 dV \\
\| \nabla \vec{v} \|^2_{L^2(\R^3)}(t) &=& \int_{\R^3} \| \nabla  \vec{v}(x,y,z,t) \|^2 dV,
\label{notationl2}
\end{eqnarray}
where $\| \vec{F} \|^2_{L^2(\R^3)}(t)$
represents the $L^2(\R^3)$ norm of the vector field $\vec{F}(x,y,z,t)$.

\section{Equations of relativistic  fluid mechanics}
\label{rel}
In the relativistic limit, it is useful to define the 4-velocity $u^\mu$,
which transforms as a  4-vector under Lorentz transformations. It reads
\begin{equation}
\begin{bmatrix}{u^0}\\ \\ {\vec u}\end{bmatrix}=
\begin{bmatrix}{{1}/ {\sqrt{1-\vec v^2}}}\\ \\{{\vec v}/ {\sqrt{1-\vec v^2}}}\end{bmatrix},
\label{umu}
\end{equation}
where we have chosen a  system of units where $c=1$.
Then, $u^\mu u_\mu=(u^0)^2-\vec u^2=1$. In the following, we also pose $k_B=1$ and $\hbar=1$.

In relativistic fluid mechanics, the non-relativistic conservation of mass (\ref{mass}) is transformed into
\begin{equation}
\partial_\mu (n u^\mu)=0,
\label{nconservation}
\end{equation}
where we have used the standard notation $\partial_\mu =\partial/\partial x^\mu$. 
Equation (\ref{nconservation}) is a conservation equation for
the 4-vector $n u^\mu$,
where $n$ represents the baryon density defined in the fluid rest frame. The baryon density is then
$n u^0$ and its flux $n \vec u$. 

Following the same program as in the non-relativistic case,
the conservation of total energy and momentum gives 4  conservation equations.
They read
\begin{equation}
\partial_\mu T^{\mu\nu}=0,
\label{dmutmunu}
\end{equation}
where $T^{\mu\nu}$ is the energy-momentum tensor. For ideal fluids, it is given by
\begin{equation}
T^{\mu\nu}=(\epsilon +P)u^\mu u^\nu-P g^{\mu\nu},
\label{tmunu}
\end{equation}
where $g^{\mu\nu}\equiv {\rm diag}(1,-1,-1,-1)$ is the Minkowski metric tensor.
We recall that $\epsilon$ is the internal energy density per unit volume, which includes the mass
energy density in the relativistic formulation.

The different components of the energy-momentum tensor can be understood as follows
\begin{itemize}
\item $T^{00}$ is the energy density;
\item $T^{0j}$ is the density of the $j^{th}$ component of momentum, with $j=1,2,3$;
\item $T^{i0}$ is the energy flux along axis $i$;
\item $T^{ij}$ is the flux along axis $i$ of the $j^{th}$ component of momentum.
\end{itemize}

Also,
we can easily check that $T^{\mu\nu}$ reduces to a simple matrix $T_{(0)}$ in the rest frame of the 
fluid, where  $u^\mu=(1,0,0,0)$. We obtain
\begin{equation}
T_{(0)}=\left(\begin{array}{cccc}\epsilon &0&0&0\\ 0&P&0&0\\ 0&0&P&0\\ 0&0&0&P\end{array}\right)
\label{tmunu0}
\end{equation}

Indeed, in the fluid rest frame, the assumption of local thermodynamic equilibrium implies isotropy. Hence,
 the energy flux $T^{i0}$ 
and the momentum density $T^{0j}$ vanish. In addition, this implies that the pressure 
tensor is proportional to the identity matrix, $T^{ij}=P\delta_{ij}$, which leads trivially to equation (\ref{tmunu0}).

Equations (\ref{nconservation}),  (\ref{dmutmunu}) and (\ref{tmunu})  represent the equations for 
ideal  relativistic fluid mechanics, in the absence of viscosity effects. Together with the equation of state of the fluid,
they form 
a closed system of equations.

Interestingly, we can rewrite the conservation equations (\ref{dmutmunu}) in terms of 
the velocity field $\vec{v}$ and the pressure. This will be useful for a direct comparison with
non-relativistic ideal fluid mechanics equations of section \ref{nonrel}. We get
\begin{equation}
\frac{\partial P}{\partial t} - \frac{\partial }{\partial t} 
\left( \frac{\epsilon + P}{1-\vec{v}^2} \right) -\nabla \cdot \left( \frac{(\epsilon + P)\vec{v}}{1-\vec{v}^2} \right)=0,
\label{enrel}
\end{equation}
for  energy conservation and
\begin{equation}
\frac{\partial}{\partial t} \vec{v} + \left( \vec{v}\cdot\nabla \right) \vec{v}  
= -\frac{1-\vec{v}^2}{\epsilon + P} \left[ \nabla P 
+ \vec{v} \frac{\partial P}{\partial t} \right],
\label{eulerrel}
\end{equation}
for momentum conservation. In order to get expressions as written here (\ref{eulerrel}) for the space
component of equations (\ref{dmutmunu}), $\partial_\mu T^{\mu i}=0$, we can use equation (\ref{enrel}).
Using the notation $D=u^\mu \partial_\mu$, these equations can be written in a more compact form as 
\begin{equation}
D \epsilon = -(\epsilon + P) \nabla \vec{u},
\label{enrel2}
\end{equation}
\begin{equation}
D u^\nu = -\frac{\nabla^\nu P}{(\epsilon + P)}.
\label{eulerrel2}
\end{equation}

At this stage, the link with section \ref{nonrel} can be done by taking
the non-relativistic limit of these equations, in which $\epsilon$ is then dominated 
by the mass energy density,  $\epsilon \simeq \rho$. Then,
equations (\ref{eulerrel}) recover their trivially non-relativistic form (\ref{euler}). In order to take the limit of 
equation (\ref{enrel}), we can write the internal (relativistic) energy density $\epsilon$
as its sum in terms of the mass energy density and a term not including the mass that we label as $\epsilon_{int}$. Then,
$$
\epsilon = \rho +\epsilon_{int}.
$$
Equation (\ref{enrel}) becomes
$$
\frac{\partial \rho}{\partial t} +\nabla \cdot \left( \rho \vec{v} \right)+
\frac{\partial \epsilon_{int}}{\partial t} +\nabla \cdot \left( (\epsilon_{int} + P)\vec{v} \right)=0.
$$
The first part of the equation vanish due to mass conservation. We are left with
\begin{equation}
\frac{\partial \epsilon_{int}}{\partial t} +\nabla \cdot \left( (\epsilon_{int} + P)\vec{v} \right)=0,
\label{en}
\end{equation}
where $\epsilon_{int}$ is the energy density as defined in the non-relativistic approximation.
This last relation (\ref{en}) corresponds to the local form of the energy conservation (in the non-relativistic limit), which can be
shown to be equivalent to the conservation of entropy (\ref{isos}).
Interestingly, in the relativistic case, we can derive the conservation of entropy density, based on equations
(\ref{enrel}) and (\ref{eulerrel}), provided that the following thermodynamic relation holds
\begin{equation}
\epsilon+P=\rho \frac{d\epsilon}{d\rho},
\label{cond}
\end{equation}
where $\rho$ represents the particle density.
The proof is simple. If we define the fluid current density as $j^\mu=\rho u^\mu$. Then, equations (\ref{enrel}) and (\ref{eulerrel})
lead to
\begin{equation}
j^\mu \left [ P \partial_\mu (\frac{1}{\rho}) + \partial_\mu (\frac{\epsilon}{\rho}) \right ]=0.
\label{toto1}
\end{equation}
Also, we have the thermodynamic identity
\begin{equation}
T ds= P d (\frac{1}{\rho}) + d (\frac{\epsilon}{\rho}),
\label{toto2}
\end{equation}
where $s$ is the entropy density (per unit volume).
From equations (\ref{toto1}) and (\ref{toto2}), we obtain
$$
u^\mu \partial_\mu s = \frac{\partial s}{\partial t} +\left(\vec{v}\cdot\nabla\right) s =0,
$$
which completes the proof. 

We can now discuss the physics content of equations (\ref{enrel}) and (\ref{eulerrel}), the relativistic counterparts of the
conservation of mass and Euler's equations of section \ref{nonrel}.
As mentioned in the introduction, we restrict our presentation to the physics case of particle production in high energy heavy-ion 
collisions,  with an energy in the center-of-mass frame per
nucleon pair much larger that the mass of the nucleon $\sqrt{s_{NN}} \gg m_N \simeq 1$~GeV.
A pedagogical review of this topic can be found in \cite{Ollitrault:2008zz}.
It is commonly accepted that these experiments can be understood as a three steps process: (i) the collision creates a small and very dense state of matter which later (ii) undergoes an hydrodynamic expansion and finally (iii) decays into the observed hadrons.
Our purpose is thus to see how equations (\ref{enrel}) and (\ref{eulerrel}) can be used in the description of step (ii).
Of course, the problem of the global regularity of solutions of these equations holds as for the non-relativistic case.
However, in the context of the collisions described just above, this is not the most important aspect. Indeed, in such experiments,
the expansion step (ii) will not last long and thus
we are not interested in the long term evolution of the velocity field from smooth  initial conditions.
In such experiments, the difficult issues are to define what are the most plausible initial conditions, whether or not a thermodynamic equilibrium can be reached, with which equation of state, and then and only then how to solve the coupled equations
(\ref{enrel}) and (\ref{eulerrel}) up to step (iii). This is not clear also if there is a well defined frontier between the decoupling stage (iii) and the 
hydrodynamical evolution period (ii).
In addition, if the step (ii) is really driven by fluid mechanics, the effects of viscosity need to be considered in the
relativistic case. 
The following questions come from this discussion.
\begin{question}
We know from section \ref{hyp} that applicability of hydrodynamics relies on the existence of a local thermodynamic
equilibrium and thus on the condition that
the mean free path of elementary constituents $\lambda_{mfp}$ is much smaller than a dimension scale of the system
under consideration $L$. Can step (i) create this condition that could prevail at step (ii)?
\label{rel1}
\end{question}
\begin{question}
How can we define the most plausible initial conditions at the beginning of step (ii)?
\label{rel2}
\end{question}
\begin{question}
Clearly, these initial conditions will contain the physics of the state of matter created during the collision, through 
the equation of state. This is where the physics comes in. Then, the coupling of the equation of state and the equations of fluid mechanics will govern the expansion of matter at stage (ii) up to (iii). However, this program can only be achieved with another essential part of the initial conditions related to the geometry of the collision. How this geometry can be inferred at step (iii) once the 
 observed hadrons are formed. The immediate question that follows is whether the experimental observables does depend mainly 
on the geometry of the collision or not. Then, the sensitivity of experiments to the physics of the state of matter posed at step (ii)
would be limited.
\label{rel3}
\end{question}

\section{On the existence of global smooth solutions of Navier-Stokes equations}
\label{regular}
In this section,
we discuss the long term behavior of the velocity field for non-relativistic fluids.
Unless stated otherwise, we consider that the fluid is contained in $\R^3$.
We also assume smooth initial conditions
of the velocity field
(divergence free, infinitely differentiable, square integrable and with strong decay properties at infinity) .
Then, question \ref{nonrel1} is related to 
the problem of whether the three-dimensional incompressible
Navier-Stokes equations can develop a finite time singularity from these
smooth initial conditions or if a (unique) global smooth 
solution exists, which means smooth for all times. 
In particular, if it exists, such a solution should be square integrable and thus possess a finite kinetic energy at all times.
This problem is still
unresolved 
\cite{temam:1984,temam:1988,constantin-foias:1988,majda-bertozzi:2002}. 
Note also that the answer to this important question is
recognized as one of the Millennium problems
\cite{fefferman:2006,ladyzhenskaya:2003}.
However, even if we can not solve definitely question \ref{nonrel1}, and thus the related
Millennium problem,
some
partial results can be derived, which are helpful to get a better view of this problem \cite{laurent}.

One key item for this is the finiteness of $\| \vec{v} \|^2_{L^2(\R^3)}(t)$ at all times, 
given by equation (\ref{v2}). This corresponds to solutions with bounded kinetic energy (integrated over all space)
at all times, as it must be.
Another key item is to search for a similar bound associated to the first derivative of the
velocity field by evaluating $\| \nabla \vec{v} \|^2_{L^2(\R^3)}(t)$.
We already know that $\| \vec{v} \|^2_{L^2(\R^3)}(t)$ is bounded at all times. If it was possible to prove that
$\| \nabla \vec{v} \|^2_{L^2(\R^3)}(t)$ is also bounded at all times, then we would have the proof of the existence 
of a global smooth  solution of the Navier-Stokes equations, not blowing up at any time $t \ge 0$ \cite{existence}.
Of course, we know that it will not be possible. Nevertheless, we need to find how $\| \nabla \vec{v} \|^2_{L^2(\R^3)}(t)$
can be bounded in order to understand the limitations to the existence of a global smooth solution.

First, we recall briefly how the bound of the kinetic energy (equation (\ref{v2})) can be obtained. 
\begin{lemma}
Solutions of Navier-Stokes equations, with smooth initial conditions  
follow
\begin{equation}
 \frac{1}{2} \frac{d}{dt}  \| \vec{v} \|^2_{L^2(\R^3)}  =- \nu \int_{\R^3} \| \nabla \vec{v} \|^2 dV.
\label{v2b}
\end{equation}
\label{lemma1}
\end{lemma}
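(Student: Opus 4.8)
The plan is to derive the identity directly from the incompressible Navier-Stokes equations (\ref{ns}) with $\rho_0=1$ by the standard energy method: take the $L^2(\R^3)$ inner product of the momentum equation with the velocity field $\vec{v}$ and integrate over all of space. First I would write the equation as $\partial_t \vec{v} + (\vec{v}\cdot\nabla)\vec{v} = -\nabla P + \nu\,\Delta\vec{v}$, dot it with $\vec{v}$, and integrate over $\R^3$. The time-derivative term immediately produces the left-hand side, since $\int_{\R^3}\vec{v}\cdot\partial_t\vec{v}\,dV = \frac{1}{2}\frac{d}{dt}\int_{\R^3}\|\vec{v}\|^2\,dV = \frac{1}{2}\frac{d}{dt}\|\vec{v}\|^2_{L^2(\R^3)}$, the interchange of differentiation and integration being licensed by the smoothness and decay assumptions on $\vec{v}$.

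The core of the argument is then to show that both the convective term and the pressure term integrate to zero, leaving only the viscous contribution. For the convective term I would use incompressibility: writing $\vec{v}\cdot\left[(\vec{v}\cdot\nabla)\vec{v}\right] = \sum_{i,j} v_i v_j \,\partial_i v_j = \sum_i v_i\,\partial_i\!\left(\frac{1}{2}\|\vec{v}\|^2\right)$ and integrating by parts gives $-\frac{1}{2}\int_{\R^3}(\nabla\cdot\vec{v})\,\|\vec{v}\|^2\,dV$, which vanishes by (\ref{incomp}). Likewise the pressure term becomes $\int_{\R^3}\vec{v}\cdot\nabla P\,dV = -\int_{\R^3}P\,(\nabla\cdot\vec{v})\,dV = 0$, again by (\ref{incomp}). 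In both integrations by parts the boundary contributions at infinity are discarded on account of the strong decay of $\vec{v}$ and its derivatives.

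Finally, for the viscous term I would integrate by parts once more, obtaining $\nu\int_{\R^3}\vec{v}\cdot\Delta\vec{v}\,dV = -\nu\int_{\R^3}\sum_{i,j}\left|\frac{\partial v_j}{\partial x_i}\right|^2 dV = -\nu\int_{\R^3}\|\nabla\vec{v}\|^2\,dV$, with the notation (\ref{dv2}). Collecting the three pieces yields exactly (\ref{v2b}). The only delicate point, and the step I expect to require the most care, is the justification that all boundary terms at infinity vanish and that differentiation under the integral sign is valid; this is precisely where the square integrability and rapid decay of $\vec{v}$ and its derivatives enter, properties that are built into the definition of smooth data and are propagated by the flow for as long as the solution remains smooth.
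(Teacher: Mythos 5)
Your proof is correct and follows essentially the same route as the paper: multiply the Navier--Stokes equations by $\vec{v}$, integrate over $\R^3$, and use integration by parts together with incompressibility and decay at infinity to kill the convective and pressure terms, leaving only the viscous dissipation. Your write-up is, if anything, slightly more explicit than the paper's (you spell out why $\int_{\R^3} v_j\left[-v_k \partial_k v_j - \partial_j P\right]dV=0$, which the paper declares ``trivial''), so there is nothing to add.
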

\begin{proof} 
This relation is obtained after developing  $ \frac{1}{2} \frac{d}{dt}  \| \vec{v} \|^2_{L^2(\R^3)}$
as
$$
 \frac{1}{2} \frac{d}{dt}  \|  \vec{v} \|^2_{L^2(\R^3)}  = 
\int_{\R^3} { v_j} { \frac{ \partial v_j}{\partial t}} dV,
$$
where we omit the sum operator for repeated indices. At this step, we can replace  $\frac{ \partial v_j}{\partial t} $
by its expression according to the Navier-Stokes equations
$$
\frac{ \partial v_j}{\partial t} + v_k \partial_k v_j +\partial_j P = \nu \partial_m \partial_m v_j,
$$
which gives
$$
 \frac{1}{2} \frac{d}{dt}  \|  \vec{v} \|^2_{L^2(\R^3)}  = 
\int_{\R^3} { v_j} \left[ -v_k \partial_k v_j -\partial_j P +\nu \partial_m \partial_m v_j \right] dV.
$$
After integrations by parts with vanishing velocity at boundaries (infinity) and the incompressibility condition, it is
trivial to show that
$
 \int_{\R^3} { v_j} \left[ -v_k \partial_k v_j -\partial_j P  \right] dV=0,
$
which completes the proof.
\end{proof}

Similarly, we can evaluate a bound for $\frac{d}{dt}  \|  \nabla \vec{v} \|^2_{L^2(\R^3)}$.
\begin{lemma}
For solutions of Navier-Stokes equations, with smooth initial conditions,
there exists a positive constant $C_\nu$, that depends on $\nu$, such that
\begin{equation}
 \frac{1}{2} \frac{d}{dt}  \|  \nabla \vec{v} \|^2_{L^2(\R^3)} 
\le
-\frac{\nu}{2} \|  \Delta \vec{v} \|^{2}_{L^2(\R^3)}
+C_\nu
 \|  \nabla \vec{v} \|^{6}_{L^2(\R^3)}
\label{dv20}
\end{equation}
\label{lemma2}
\end{lemma}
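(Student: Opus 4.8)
The plan is to carry out the standard enstrophy energy estimate, mimicking the structure of the proof of Lemma \ref{lemma1} but now testing the Navier-Stokes equations against $-\Delta\vec{v}$ instead of $\vec{v}$. First I would write, integrating by parts and using the decay of $\vec{v}$ at infinity,
\[
\frac{1}{2}\frac{d}{dt}\|\nabla\vec{v}\|^2_{L^2(\R^3)}
= \int_{\R^3}\partial_i v_j\,\partial_i\partial_t v_j\,dV
= -\int_{\R^3}\Delta v_j\,\partial_t v_j\,dV,
\]
again omitting the sum over repeated indices. Substituting $\partial_t v_j = -v_k\partial_k v_j - \partial_j P + \nu\,\partial_m\partial_m v_j$ produces three contributions. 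The viscous term gives $-\nu\|\Delta\vec{v}\|^2_{L^2(\R^3)}$. The pressure term vanishes: one integration by parts in the $j$ direction gives $\int_{\R^3}\Delta v_j\,\partial_j P\,dV = -\int_{\R^3}\Delta(\partial_j v_j)\,P\,dV = 0$, since $\partial_j v_j=\nabla\cdot\vec{v}=0$ by incompressibility.

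I am therefore left with
\[
\frac{1}{2}\frac{d}{dt}\|\nabla\vec{v}\|^2_{L^2(\R^3)}
= -\nu\,\|\Delta\vec{v}\|^2_{L^2(\R^3)} + N,
\qquad
N := \int_{\R^3}\Delta v_j\,v_k\,\partial_k v_j\,dV,
\]
so the entire content of the lemma reduces to estimating the nonlinear term $N$. This is the step I expect to be the \emph{main obstacle}, because (unlike the integrations by parts used up to now) it genuinely requires the tools of harmonic analysis and cannot close on itself without interpolation. The approach is to bound $N$ pointwise by $|\Delta\vec{v}|\,|\vec{v}|\,|\nabla\vec{v}|$ and apply H\"older's inequality with the exponents $(2,6,3)$, noting $\tfrac12+\tfrac16+\tfrac13=1$:
\[
|N| \le \|\Delta\vec{v}\|_{L^2(\R^3)}\,\|\vec{v}\|_{L^6(\R^3)}\,\|\nabla\vec{v}\|_{L^3(\R^3)}.
\]
I would then invoke the Sobolev embedding $\dot H^1(\R^3)\hookrightarrow L^6(\R^3)$, giving $\|\vec{v}\|_{L^6(\R^3)}\le C\,\|\nabla\vec{v}\|_{L^2(\R^3)}$, together with the Gagliardo--Nirenberg interpolation inequality in three dimensions, $\|\nabla\vec{v}\|_{L^3(\R^3)}\le C\,\|\nabla\vec{v}\|_{L^2(\R^3)}^{1/2}\,\|\Delta\vec{v}\|_{L^2(\R^3)}^{1/2}$, where I use $\|\nabla^2\vec{v}\|_{L^2(\R^3)}=\|\Delta\vec{v}\|_{L^2(\R^3)}$ on the whole space. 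Combining these gives
\[
|N| \le C\,\|\Delta\vec{v}\|_{L^2(\R^3)}^{3/2}\,\|\nabla\vec{v}\|_{L^2(\R^3)}^{3/2}.
\]

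The final step is to absorb the factors of $\|\Delta\vec{v}\|_{L^2(\R^3)}$ into the dissipation term by Young's inequality. Writing the right-hand side as a product $XY$ with $X=\|\Delta\vec{v}\|_{L^2(\R^3)}^{3/2}$ and $Y=\|\nabla\vec{v}\|_{L^2(\R^3)}^{3/2}$ and applying Young with conjugate exponents $p=4/3$ and $q=4$ turns the $X$ factor into $\|\Delta\vec{v}\|_{L^2(\R^3)}^{2}$ and the $Y$ factor into $\|\nabla\vec{v}\|_{L^2(\R^3)}^{6}$. Tuning the Young parameter so that the coefficient of $\|\Delta\vec{v}\|_{L^2(\R^3)}^{2}$ equals $\nu/2$ yields a constant $C_\nu$ scaling like $\nu^{-3}$, and combining with the $-\nu\|\Delta\vec{v}\|^2_{L^2(\R^3)}$ already present gives exactly
\[
\frac{1}{2}\frac{d}{dt}\|\nabla\vec{v}\|^2_{L^2(\R^3)}
\le -\frac{\nu}{2}\,\|\Delta\vec{v}\|^2_{L^2(\R^3)} + C_\nu\,\|\nabla\vec{v}\|^6_{L^2(\R^3)},
\]
which is the claimed inequality (\ref{dv20}). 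The sixth power of $\|\nabla\vec{v}\|_{L^2(\R^3)}$ is the unavoidable trace of the supercritical scaling of the three-dimensional problem, and it is precisely what will obstruct turning this differential inequality into a global-in-time bound.
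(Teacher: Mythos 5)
Your proof is correct and follows essentially the same route as the paper's: the enstrophy identity with the pressure term killed by incompressibility, the bound $|N|\le C\,\|\nabla\vec{v}\|_{L^2(\R^3)}^{3/2}\|\Delta\vec{v}\|_{L^2(\R^3)}^{3/2}$ on the cubic term, and a scaled Young's inequality producing $C_\nu\propto\nu^{-3}$. The only differences are cosmetic: the paper keeps the nonlinearity in the form $-\int_{\R^3}\partial_i v_j\,\partial_i v_k\,\partial_k v_j\,dV$ (equal to your $N$ after one integration by parts using $\nabla\cdot\vec{v}=0$) and dismisses the key estimate as ``some algebra,'' whereas you test against $-\Delta\vec{v}$ and supply the H\"older--Sobolev--Gagliardo--Nirenberg chain explicitly, which if anything makes your argument more complete than the paper's.
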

\begin{proof} 
We proceed as for lemma \ref{lemma1}.  We  write
$$
 \frac{1}{2} \frac{d}{dt}  \|  \nabla \vec{v} \|^2_{L^2(\R^3)}  =
 \int_{\R^3} {\partial_i v_j} {\partial_i \frac{ \partial v_j}{\partial t}} dV,
$$
where  $\frac{ \partial v_j}{\partial t}$ can be replaced by its expression according the the Navier-Stokes equations
$$
 \frac{1}{2} \frac{d}{dt}  \|  \nabla \vec{v} \|^2_{L^2(\R^3)}  =
 \int_{\R^3} {\partial_i v_j} {\partial_i \left[ -v_k \partial_k v_j -\partial_j P +\nu \partial_m \partial_m v_j \right]} dV.
$$
After integrations by parts with vanishing velocity at boundaries (infinity) and the incompressibility condition, 
we obtain
$$
 \frac{1}{2} \frac{d}{dt}  \|  \nabla \vec{v} \|^2_{L^2(\R^3)}  =
 -\int_{\R^3} \partial_i v_j \partial_i v_k \partial_k v_j dV
-\nu \int_{\R^3} \partial_i \partial_i v_j \partial_m \partial_m v_j dV.
$$
This can be rewritten as
\begin{equation}
 \frac{1}{2} \frac{d}{dt}  \|  \nabla \vec{v} \|^2_{L^2(\R^3)} +
\nu \| \Delta \vec{v} \|^2_{L^2(\R^3)} =
 -\int_{\R^3} \partial_i v_j \partial_i v_k \partial_k v_j dV,
\label{dv22}
\end{equation}
using the notation
$$
\| \Delta \vec{v} \|^2_{L^2(\R^3)} = 
\int_{\R^3} \| \Delta \vec{v} \|^2 dV =
\int_{\R^3} \partial_i \partial_i v_j \partial_m \partial_m v_j dV.
$$
Equation (\ref{dv22}) is the key element of the lemma. The difficulty is to evaluate 
a bound of the integral of the right hand side of this equation, which 
contains three times the gradient of the velocity field.
In general, this term has no definite sign.
After some algebra, we can show that
there exists positive constants $C$ and $C_\nu$ such that
$$
-\int_{\R^3} \partial_i v_j \partial_i v_k \partial_k v_j dV
\le 
C
\|  \nabla \vec{v} \|^{3/2}_{L^2(\R^3)}
\|  \Delta \vec{v} \|^{3/2}_{L^2(\R^3)}
\le
\frac{\nu}{2} \|  \Delta \vec{v} \|^{2}_{L^2(\R^3)}
+C_\nu
 \|  \nabla \vec{v} \|^{6}_{L^2(\R^3)},
$$
where the last inequality is a scaled Young's inequality, which
implies that $C_\nu$ is proportional to $1/\nu^3$.
This completes the proof.
\end{proof}

With lemmas \ref{lemma1} and \ref{lemma2},
we have derived the key elements to enter into the discussion of question \ref{nonrel1}.

First, if we consider a fluid in a two-dimensional space $\R^2$, then the problem becomes simple.
Indeed, we have for velocity fields defined on $\R^2$
$$\int_{\R^2} \partial_i v_j \partial_i v_k \partial_k v_j dS=0.$$ 
Then, equation (\ref{dv22}) simplifies to
$$
 \frac{1}{2} \frac{d}{dt}  \|  \nabla \vec{v} \|^2_{L^2(\R^2)} +
\nu \| \Delta \vec{v} \|^2_{L^2(\R^2)} =0.
$$
This trivially implies that $\|  \nabla \vec{v} \|^2_{L^2(\R^2)}(t)$ is bounded at all times and thus
there exists a (unique)
global smooth solution of Navier-Stokes equations in $\R^2$, given smooth initial conditions.
We discuss the uniqueness of the solution, when it exists, below.

Coming back to $\R^3$, lemma \ref{lemma2} can be simplified to obtain the following inequality
$$
 \frac{1}{2} \frac{d}{dt}  \|  \nabla \vec{v} \|^2_{L^2(\R^3)} 
\le
C_\nu
 \|  \nabla \vec{v} \|^{6}_{L^2(\R^3)}.
$$
It can be solved using standard techniques (Gronwall's theorem). We get
$$
\|  \nabla \vec{v} \|^2_{L^2(\R^3)}(t)
\le
\frac{\|  \nabla \vec{v} \|^2_{L^2(\R^3)}(t=0)}
{\sqrt{1-2 C_\nu t \|  \nabla \vec{v} \|^4_{L^2(\R^3)}(t=0)}},
$$
which proves the local existence (finite in time) of a solution of the Navier-Stokes equations 
up to a critical time 
$$
T^*=\frac{1}{2 C_\nu \|  \nabla \vec{v} \|^4_{L^2(\R^3)}(t=0)}.
$$

Also, using the inequality $\|  \nabla \vec{v} \|^2_{L^2(\R^3)}
\le \|   \vec{v} \|_{L^2(\R^3)} \|  \Delta \vec{v} \|_{L^2(\R^3)}$,
which can be derived after integrating by parts,
 lemma \ref{lemma2} can be rewritten as
$$
 \frac{1}{2} \frac{d}{dt}  \|  \nabla \vec{v} \|^2_{L^2(\R^3)} 
\le
-\frac{\nu}{2} \frac{\| \nabla \vec{v} \|^4_{L^2(\R^3)}}{\| \vec{v} \|^2_{L^2(\R^3)}}
+C_\nu
 \|  \nabla \vec{v} \|^{6}_{L^2(\R^3)}
$$
All the quantities appearing in this inequality are of course functions of time.
This is because the positive term
$C_\nu  \|  \nabla \vec{v} \|^{6}_{L^2(\R^3)}$
may be too large compared to
$
-\frac{\nu}{2} \frac{\| \nabla \vec{v} \|^4_{L^2(\R^3)}}{\| \vec{v} \|^2_{L^2(\R^3)}}
$
that we may observe a finite time blow-up of 
$\|  \nabla \vec{v} \|^2_{L^2(\R^3)}$.

It is important to remark that if there exists a time
$T$ for which
$$
-\frac{\nu}{2} \frac{\| \nabla \vec{v} \|^4_{L^2(\R^3)}(T)}{\| \vec{v} \|^2_{L^2(\R^3)}(T)}
+C_\nu
 \|  \nabla \vec{v} \|^{6}_{L^2(\R^3)}(T) \le 0,
$$
then $\|  \nabla \vec{v} \|^2_{L^2(\R^3)} (t)$
can only decrease for $t \ge T$.
This means that if the initial conditions verify the above inequality,
then $\|  \nabla \vec{v} \|^2_{L^2(\R^3)} (t)$ will decrease for all times and the existence of a global
a (unique) smooth solution is ensured.
Stated otherwise, for small enough initial conditions, such that
$$
\|   \vec{v} \|^{2}_{L^2(\R^3)}(t=0)\|  \nabla \vec{v} \|^{2}_{L^2(\R^3)}(t=0) \le \frac{\nu}{2 C_\nu}
$$
then a global solution  exists.
Note also that each time we have proved the existence of a smooth solution, proving the uniqueness
is easy. This follows from the application of the Gronwall's theorem to the difference of two possible solutions
if we assume that there exists two solutions. Then, we can show that this difference is zero, which proves the uniqueness.

Finally, the case of Euler's equations is quite similar. Indeed, the kinetic energy is conserved and thus independent of the time,
and, for what concerns $\|  \nabla \vec{v} \|^2_{L^2(\R^3)}$, we can derive an inequality which is close
to the one obtained in lemma \ref{lemma2}. We obtain
\begin{equation}
 \frac{1}{2} \frac{d}{dt}  \|  \nabla \vec{v} \|^2_{L^2(\R^3)} \le
 \left | \int_{\R^3} \partial_i v_j \partial_i v_k \partial_k v_j dV  \right |.
\label{dv22euler}
\end{equation}
From equation (\ref{dv22euler}), similarly to the Navier-Stokes case, it is possible to prove the local existence of a smooth  solution of Euler's equations, 
given smooth initial conditions.

After having derived general properties, the question
\ref{nonrel2} refers to the physical meaning of these mathematical relations.
This is why it is an essential question, without which the mathematical view is useless.
How can we build explicitly blowing up or not solutions of Navier-Stokes equations?
A key point here is to realize that Navier-Stokes equations are invariant under the
transformation
\begin{eqnarray}
\vec{v}_\lambda(\vec{x},t)&=&\lambda\vec{v}(\lambda\vec{x},\lambda^2t) \label{lambda1} \\
P_\lambda(\vec{x},t)&=&\lambda^2 P(\lambda\vec{x},\lambda^2t) .
\end{eqnarray}

This means that if the velocity field $\vec{v}(x,y,z,t)$ 
is a solution of the equations, then the velocity field $\vec{v}_\lambda(x,y,z,t)$ 
is another
acceptable solution (by construction).
Then, the following equality holds trivially
$$
 \| \vec{v}_{1/\lambda} \|^2_{L^2(\R^3)}= \lambda \| \vec{v} \|^2_{L^2(\R^3)}.
$$
We can think of this transformation, with $\lambda \gg 1$, as taking the fine scale behavior of the velocity field 
$\vec{v}(x,y,z,t)$and matching it with an identical (but rescaled and slowed down) coarse scale
behavior $\vec{v}_{1/\lambda}(x,y,z,t)$.
Previously, we were searching for bounds on $\| \vec{v} \|^2_{L^2(\R^3)}(t)$ and $\| \nabla \vec{v} \|^2_{L^2(\R^3)}(t)$.
Let us assume that such bounds
exist. We label them respectively as M and C. Obviously, for the transformed velocity field $\vec{v}_{1/\lambda}(x,y,z,t)$,
these bounds are multiplied by $\lambda$ (here assumed to be larger than $1$). 
This means that the bounds are worsened.
This also means that each time we have a solution of Navier-Stokes equations with
bounds M and C, then another solution
is possible, with worsened bounds scaled by $\lambda$. Blow-up can occur when
a solution of Navier-Stokes equations shifts its energy into increasingly finer and finer scales, thus evolving
more and more rapidly and eventually reaching a singularity in which the scales (in both space and time) tend
towards zero. In such a configuration, we lose obviously the effectiveness of the bounds (and consequently the
control) on the maximum energy and cumulative energy dissipation. For example, this is possible that at some
time, a solution of the equations shifts its energy from a spatial scale $1/\lambda$ to $1/(2\lambda)$ in a time of order $1/\lambda^2$,
providing that the concept of fluid element still holds.
Then,
if this behavior repeats over and over again, this is clear that the solution is divergent. This simple argument
gives an intuitive view of the problem.
It is quite easy to build a solution of Navier-Stokes equations that is smooth over a finite period of time
but that will blow-up at some point. In order to prevent this, we need to find a natural way to stop the above
cascade phenomenon.

\begin{conjecture}
From the discussion of questions \ref{nonrel1} and \ref{nonrel2}, we can think that a solution to 
the Millennium problem would  be to show that it is always possible to construct mathematically
a solution that diverges in a finite time, following the cascade behavior (above), apart from some specific initial conditions like small initial data where the global regularity is ensured.
Then, global regularity would not exist in general and the answer to the question of the Millennium problem would be negative.
\end{conjecture}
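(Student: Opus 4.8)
The plan is to turn the cascade heuristic sketched above into a rigorous construction, exhibiting an explicit blow-up solution for all but exceptional (small) initial data. First I would fix a single smooth, divergence-free \emph{building-block} field $\vec{w}(\vec{x})$ supported near a unit spatial scale and, using the scaling invariance (\ref{lambda1}), generate from it a sequence of rescaled copies $\vec{w}_k(\vec{x},t)=\lambda_k\vec{w}(\lambda_k\vec{x},\lambda_k^2 t)$ with $\lambda_k=2^k\lambda$, each localised at scale $1/\lambda_k$ and switched on during a time window of length $\tau_k\sim 1/\lambda_k^2$. Because $\sum_k \tau_k$ is a convergent geometric-type series, all infinitely many copies are activated within a finite time $T^*$, the candidate blow-up time, of the same form as the critical time already obtained from Lemma \ref{lemma2}.

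Second, I would prescribe the phases and amplitudes of the $\vec{w}_k$ so that the self-advection term feeds a fixed fraction of the enstrophy from scale $1/\lambda_k$ into the next finer scale $1/\lambda_{k+1}$ before viscosity can dissipate it. Quantitatively this amounts to keeping the local Reynolds number bounded below at every stage, so that the cubic transfer term on the right of (\ref{dv22}) stays positive and at least comparable to the viscous loss $\nu\|\Delta\vec{v}\|^2_{L^2(\R^3)}$. Since the $L^2$ norm of each copy scales only like $\lambda_k^{-1}$, the supercritical scaling noted above, while its contribution to $\|\nabla\vec{v}\|^2_{L^2(\R^3)}$ grows like $\lambda_k$, the kinetic energy stays bounded while the enstrophy diverges as $t\to T^*$, which is exactly the blow-up one wants to exhibit.

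The hard part, and precisely what makes this the Millennium problem, is the second step. The transfer term $-\int_{\R^3}\partial_i v_j\,\partial_i v_k\,\partial_k v_j\,dV$ has no definite sign, and the bound of Lemma \ref{lemma2} already shows that viscosity supplies a loss built from one more derivative ($\sim\|\Delta\vec{v}\|^2_{L^2(\R^3)}$) than the worst-case nonlinear gain ($\sim\|\nabla\vec{v}\|^6_{L^2(\R^3)}$); tuning the building blocks so that the nonlinearity genuinely dominates over an entire sequence of windows, rather than being killed either by dissipation or by the nonlocal pressure response (\ref{deltap}), is where every known attempt stalls. Incompressibility makes this worse, since each $\vec{w}_k$ must stay divergence-free and the scales are coupled through the elliptic pressure equation, so the copies cannot be superposed independently.

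For this reason the realistic route is indirect: first establish the cascade rigorously for an averaged or dyadic \emph{shell} model, in which the nonlinear interactions are restricted to neighbouring scales and the sign of the energy transfer can be imposed by hand, and prove finite-time blow-up there; then argue that the mechanism is intrinsic to the supercritical scaling rather than an artefact of the simplification. The decisive, still-missing step is to transport such a construction back to the genuine three-dimensional equations while respecting both divergence-freeness and the nonlocal pressure constraint. Until that step is carried out the statement remains a conjecture, and settling it either way is equivalent to resolving Question \ref{nonrel1}.
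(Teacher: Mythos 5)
The statement you were asked to prove is labeled a \emph{conjecture}, and the paper offers no proof of it: all the paper provides is the heuristic that precedes it --- the scaling invariance (\ref{lambda1}), the observation that the rescaled field $\vec{v}_{1/\lambda}$ has its $L^2$ bounds worsened by a factor $\lambda$, and the informal picture of energy shifting from scale $1/\lambda$ to $1/(2\lambda)$ in a time of order $1/\lambda^2$ so that infinitely many such steps fit into a finite interval. Your proposal is exactly this same cascade picture, made more quantitative: your building blocks $\vec{w}_k$, the window lengths $\tau_k\sim 1/\lambda_k^2$ summing to a finite $T^*$, and the scalings (energy of each copy $\sim\lambda_k^{-1}$, enstrophy contribution $\sim\lambda_k$, hence bounded kinetic energy with diverging $\|\nabla\vec{v}\|^2_{L^2(\R^3)}$) are all consistent with the paper's discussion and correctly computed. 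So there is no divergence of approach --- and, crucially, you have not claimed more than the paper does: you explicitly concede that the decisive step (forcing the sign-indefinite transfer term $-\int_{\R^3}\partial_i v_j\,\partial_i v_k\,\partial_k v_j\,dV$ in (\ref{dv22}) to dominate the viscous loss of Lemma \ref{lemma2} across infinitely many stages, while respecting incompressibility and the nonlocal pressure coupling (\ref{deltap})) is missing. That gap is not a defect of your write-up relative to the paper; it is the reason the statement is a conjecture rather than a theorem, and indeed it is essentially the Millennium problem itself.

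Where you genuinely go beyond the paper is the indirect route through simplified models: proving blow-up first for a dyadic/shell model in which the inter-scale transfer can be given a definite sign, and then attempting to transport the mechanism back to the true equations. This is not in the paper at all, and it accurately reflects the state of the rigorous literature (finite-time blow-up is known for such dyadic models and for suitably averaged versions of the Navier--Stokes nonlinearity, precisely because the averaging removes the cancellations that the pressure and incompressibility enforce). This addition sharpens the conjecture usefully: it isolates the transport-back-to-the-full-equations step as the irreducible obstruction. One caution on your framing: proving blow-up for a modified model does not ``settle the statement either way''; only the last step --- the one neither you nor the paper can supply --- would do that, so your final sentence equating the program with resolving Question \ref{nonrel1} should be read as applying only to that final step, not to the shell-model stage.
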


We can now move to question \ref{nonrel3} which is correlated to the above discussion in the sense as it reverses
the charge of the proof. 
The first step is to find a new way of bounding the
term  $-\int_{\R^3} \partial_i v_j \partial_i v_k \partial_k v_j dV$ in equation (\ref{dv22}), in order to
derive consequently
another bound for $\|  \nabla \vec{v} \|^2_{L^2(\R^3)}$.
After some algebra, we can show  that there exists a positive constant $A$ such that
$$
-\int_{\R^3} \partial_i v_j \partial_i v_k \partial_k v_j dV
\le
A \|  \Delta \vec{v} \|^{15/8}_{L^2(\R^3)}
 \|         \vec{v} \|^{1/8}_{L^2(\R^3)}
 \|         \vec{v} \|_{L^4(\R^3)},
$$
where the $L^4(\R^3)$ norm of the velocity field is apparent. This term reads
$$
\|         \vec{v} \|_{L^4(\R^3)}(t) = \left [ \int_{\R^3} \| \vec{v}(x,y,z,t) \|^4 dV \right]^{1/4}.
$$
We already know that the term in $\|         \vec{v} \|^{1/8}_{L^2(\R^3)}$ is bounded at all times by its initial value.
In addition, if we assume that the $L^4(\R^3)$ norm of the velocity field is bounded at all times, then there exists a 
constant $K > 0$ such that 
$$
-\int_{\R^3} \partial_i v_j \partial_i v_k \partial_k v_j dV
\le
K \|  \Delta \vec{v} \|^{15/8}_{L^2(\R^3)}
$$
From a scaled Young's inequality, we can transform this relation into
$$
-\int_{\R^3} \partial_i v_j \partial_i v_k \partial_k v_j dV
\le
\frac{\nu}{2} \|  \Delta \vec{v} \|^{2}_{L^2(\R^3)} + F_\nu,
$$
where $F_\nu$ is a constant that depends on the $K$ and $\nu$.
Then, using equation (\ref{dv22}) we obtain
$$
 \frac{1}{2} \frac{d}{dt}  \|  \nabla \vec{v} \|^2_{L^2(\R^3)} 
+
\frac{\nu}{2} \|  \Delta \vec{v} \|^{2}_{L^2(\R^3)}
\le
F_\nu.
$$
The last step is to use the Poincar\'e's inequality
$$
\|  \Delta \vec{v} \|^{2}_{L^2(\R^3)} \ge C \|  \nabla \vec{v} \|^2_{L^2(\R^3)},
$$
where $C$ is a positive constant.
Finally
$$
 \frac{d}{dt}  \|  \nabla \vec{v} \|^2_{L^2(\R^3)} 
+
\frac{\nu}{C}  \|  \nabla \vec{v} \|^{2}_{L^2(\R^3)}
\le
2 F_\nu.
$$
This  implies immediately that $\|  \nabla \vec{v} \|^2_{L^2(\R^3)}(t)$ is also bounded at all times and thus
the existence of smooth a (unique) global solution is ensured, under the assumption
that the $L^4(\R^3)$ norm of the velocity field is bounded at all times.
Intuitively, we can understand that a universal $L^4(\R^3)$ norm of the velocity would help for 
building a smooth (not blowing up) solution. Indeed, after the transformation (\ref{lambda1}), we have
$$
 \| \vec{v}_{1/\lambda} \|^4_{L^4(\R^3)}= \frac{1}{\lambda} \| \vec{v} \|^4_{L^4(\R^3)}.
$$
Hence, a bound on the $L^4(\R^3)$ norm of the velocity field becomes increasingly better for
$\vec{v}_{1/\lambda}$ when the parameter $\lambda$ is increased.
Under such conditions, it is possible to understand intuitively from the discussion
of question \ref{nonrel2} why blow-up do not occur.
Unfortunately, there is no way to prove that the  $L^4(\R^3)$ norm of the velocity field is bounded at all times
based solely on Navier-Stokes equations. Let us note that it is possible to define slightly modified equations that 
meet this requirement \cite{grafke}.

For completeness, let us conclude this part by another way of treating the global regularity of the velocity field, 
through a physical approach.
Indeed, the physical process responsible for the mathematical difficulties of the 
incompressible Navier-Stokes equations (in 3 dimensions
of space) is vortex stretching, related to the vorticity $\vec{\omega} = \vec{\nabla} \times \vec{v}$.
Vorticity is twice the local angular velocity of a fluid element. It follows the property
$$
\|  \vec{\omega}(.,t) \|^{2}_{L^2(\R^3)}=\|  \nabla \vec{v} (.,t) \|^{2}_{L^2(\R^3)}.
$$
This identity implies that we can not obtain any new results, not derived by using the velocity field or its first gradient, when using the vorticity.
What we can do is to get a renewed physical view of the open questions. 
Using  vorticity,
the incompressible Navier-Stokes equations can be written  as
\begin{equation}
	\frac{\partial}{\partial t} \vec{\omega}+ \left(\vec{v}\cdot\nabla\right)\vec{\omega}  = \nu \Delta \vec{\omega}
+\left(\vec{\omega}\cdot\nabla\right)\vec{v},
\label{vorticityeq}
\end{equation}
where the gradient of the pressure has disappeared when taking the rotational of the Navier-Stokes equations (\ref{ns}).
The above dynamical (vector) equation for the vorticity (\ref{vorticityeq}) expresses that the angular acceleration of a
fluid element (left hand side of the equation) results from the diffusive exchange of angular momentum with neighboring 
elements (term in $\nu \Delta \vec{\omega}$) and vortex stretching (term in $\left(\vec{\omega}\cdot\nabla\right)\vec{v}$).
We speak of vortex stretching because, at locations where $\left(\vec{\omega}\cdot\nabla\right)\vec{v}\cdot \vec{\omega} > 0$,
the fluid element is stretched in the direction of $\vec{\omega}$ and compressed in at least one other orthogonal direction. Then, the moment
of inertia of the fluid element is decreasing. By conservation of angular momentum, this means that the speed of rotation 
is increasing. This phenomenon competes with the diffusion term and can contribute to the amplification of
$\|  \vec{\omega}(.,t) \|^{2}_{L^2(\R^3)}$. 
Also,
this process can work in reverse to suppress  vorticity in other
locations. In two dimensions of space, the vortex stretching term is absent from equation (\ref{vorticityeq}). This is related to
the vanishing of the integral $\int_{\R^2} \partial_i v_j \partial_i v_k \partial_k v_j dS$ defined on $\R^2$ (already mentioned above).
Then, there can be obviously no problem with the global existence of smooth solutions when smooth initial conditions are given.
This discussion suggests that the study of the physical vortex dynamics (for itself), specifically local vortex stretching mechanisms, could be an interesting 
 approach to the Millennium problem.

\section{On the role of relativistic hydrodynamics in heavy-ion collisions}
\label{hi}
In this section, we examine fluid mechanics from another point of view: the short time evolution of 
a relativistic system, that can be produced in 
high energy heavy-ion collisions
(section \ref{rel}).
The applicability of hydrodynamics requires a system in local equilibrium (section \ref{hyp}), that is, the system 
has to behave as a complete entity like a liquid or a gas rather than a collection of individual particles.
These particles must be interacting with each other to reach equilibrium. The question is then whether their interactions
are frequent enough for thermodynamic equilibrium to be established. 
At RHIC, the energy of a collision is about $100$ GeV per nucleon. This means that each incoming nucleus is contracted 
by a Lorentz factor $\gamma \simeq 100$. The collision creates thousands of particles in a small volume. Thus,
 this is not impossible that this system may reach a state of local thermodynamic equilibrium.
One way to quantify the frequency of collisions
is by comparing the mean free path $\lambda_{mfp}$, the average distance a particle travels between two collisions, and
the typical size $L$ of the medium.
The mean free path is defined as 
\begin{equation}
\lambda = \frac{1}{n \sigma},
\label{mfp}
\end{equation}
where $n$ is the particle density and $\sigma$ the interaction cross section. A local thermodynamic
equilibrium can be reached only in the regime 
$\lambda \ll L$ (section \ref{hyp}).
In this context, a high energy heavy-ion collision would lead to the formation of a very dense (and hot) medium, with a mean free path
much smaller than the nuclear radius. This system would reach local thermodynamic equilibrium quickly (in a bout $1$ fm/c) and then
starts to undergo an   hydrodynamic expansion  due in particular to huge pressure gradients in the medium. 
There are respectively the steps (i) and (ii) of section \ref{rel}.
This is during the
stage (ii), which would last for about $10$ fm/c that the dense nuclear medium is expected to exist is a specific 
state of matter named Quark Gluon Plasma (QGP) \cite{Huovinen:2013wma,Ollitrault:2008zz}.
Then, after the medium has expanded and cooled sufficiently, the lower densities force
matter to decay into the observed hadrons, which corresponds to the last step (iii).


The idea to to evaluate the ratio $K=\lambda_{mfp}/L$, called the Knudsen number, and to check
whether the condition $K \ll 1$ can be realized. 
Below, we discuss the following proposition.
\begin{proposition}
The condition $K \ll 1$ can be realized only under the condition that a medium is formed 
(quickly after the heavy-ion collision) by strongly coupled constituents.
\end{proposition}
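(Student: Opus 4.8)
The plan is to turn the definition $K=\lambda_{mfp}/L$ together with the expression $\lambda_{mfp}=1/(n\sigma)$ from equation (\ref{mfp}) into a bound that isolates the role of the interaction strength. Writing
\begin{equation}
K=\frac{1}{n\,\sigma\,L},
\end{equation}
I would first estimate the two microscopic inputs by dimensional analysis in the natural units already adopted in section \ref{rel} ($c=k_B=\hbar=1$). For a relativistic medium near local equilibrium the only available scale is the temperature $T$, so the particle density behaves as $n\sim T^3$ (up to an $O(1)$ factor counting the internal degrees of freedom), while the typical momentum of a constituent is of order $T$. This first step reduces $K$ to $K\sim 1/(\sigma\,T^3 L)$, so that everything hinges on how large the cross section $\sigma$ can be made.

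The second step is to bound $\sigma$ from above. Quantum-mechanical unitarity caps the contribution of a partial wave at $\sigma_{\max}\sim 1/k^2\sim 1/T^2$ for constituents with momentum $k\sim T$; this is a model-independent ceiling, valid at any coupling. Inserting $\sigma\le\sigma_{\max}$ gives the lower bound
\begin{equation}
K\;\gtrsim\;\frac{T^2}{T^3 L}=\frac{1}{T\,L}.
\end{equation}
Hence $K\ll 1$ is possible at all only when $T L\gg 1$, which is precisely the statement that a hot, dense medium must be formed quickly: the energy has to be deposited in a small volume before longitudinal expansion dilutes it, so that the initial temperature is large while the transverse size $L$ stays of order the nuclear radius. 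This accounts for the ``formed quickly'' half of the proposition.

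The third step is to show that, even when $TL\gg1$, the smallness of $K$ cannot be achieved at weak coupling. A perturbative estimate of the cross section scales as $\sigma\sim\alpha^2/T^2$ with $\alpha$ the dimensionless coupling, giving $K\sim 1/(\alpha^2\,T L)$, a factor $1/\alpha^2\gg1$ above the unitarity-limited minimum. Demanding $K\ll1$ at the physically allowed values $TL=O(10)$ then forces $\alpha^2\gtrsim 1/(TL)$, i.e. $\alpha\gtrsim1$: the cross section must sit close to its unitarity ceiling, which is exactly the regime of strongly coupled constituents. The main obstacle—and the reason the argument is best phrased as a consistency (reductio) statement—is that the relation $\sigma\sim\alpha^2/T^2$ is itself only a weak-coupling result, so the conclusion $\alpha\gtrsim1$ already lies outside its domain of validity and cannot be taken as a literal computation. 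The clean way around this is to lean entirely on the unitarity bound of the second step: any weakly coupled dynamics yields $\sigma\ll\sigma_{\max}$ and therefore $K$ far above $1/(TL)$, so that only a medium whose constituents interact near the unitarity limit, that is at strong coupling, can realize $K\ll1$.
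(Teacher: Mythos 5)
Your proposal is correct at the same heuristic level as the paper's own argument (which itself concedes ``we can not prove rigorously this statement'') and follows essentially the same route: a perturbative estimate $\sigma\sim\alpha^2/T^2$ (the paper's $\sigma\sim g^4/\langle p\rangle^2$, equation (\ref{xsp})) shows that weak coupling leaves $K$ of order one or larger for realistic $TL=O(10)$, while a quantum ceiling on the scattering rate shows that only near-maximal cross sections, i.e.\ strong coupling, can realize $K\ll 1$. Your unitarity bound $\sigma\lesssim 1/T^2$ is the same physics as the paper's requirement that $\lambda_{mfp}$ cannot fall below the de Broglie wavelength $1/\langle p\rangle$; if anything, your version states the resulting constraint $K\gtrsim 1/(TL)$ with the logically correct orientation (a floor on $K$ attained only at strong coupling), whereas the paper writes it, somewhat confusingly, as an upper bound $K\le 1/(3LT)$.
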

\begin{proof}
We can not prove rigorously this statement. Our purpose is mainly to discuss the ideas.
The only way to compute $K$ is to assume that the thermodynamic equilibrium is
obtained and that subsequent thermodynamic relations can be written. For simplicity, we use the laws derived in the massless gas limit.
First, we consider a weakly coupled system. This means that the interaction cross section $\sigma$  appearing in equation (\ref{mfp}) can
be expressed as 
\begin{equation}
\sigma \sim \frac{g^4}{<p>^2}, 
\label{xsp}
\end{equation}
where $g$ is the coupling constant ($g \ll 4 \pi$) and $<p>$ the average momentum of particles.
In the massless gas limit, $<p>=3 T$, where $T$ is the temperature of the medium. Also the particle density $n$ is related to
the entropy density by $s=4n$, easily computable in the massless gas limit. 
Altogether, we obtain 
\begin{equation}
K=\frac{1}{n \sigma L} \simeq \frac{4}{ g^4 L T}
\end{equation}
With the typical values for the system considered here, $L \simeq 10$ fm, $T \simeq 200$ MeV and $g \ll 2 \pi$, the Knudsen 
number $K$ is then necessarily greater than $0.4$, which does not justify the hypothesis of thermodynamic local equilibrium.
As a second step, we relax the weak coupling approximation. Then, the cross section $\sigma$ does not follow 
the expression (\ref{xsp}). Indeed, we can not give any interesting formula for $\sigma$. 
However, for a strongly coupled system, we can expect that there are a lots of collisions and, on average, each collision brings the
system closer to equilibrium. Hence, intuitively, it is more probable to find the system in thermodynamic equilibrium than
in the weakly coupled case. We make this statement more quantitative below.
We
start from the basic definition of the Knudsen number $K=\lambda_{mfp}/L$ and find a lower bound for $\lambda_{mfp}$.
Quite generally in the physics case of a strongly coupled system, the mean free path needs to be larger that the De Broglie 
wave length of the system.
Then
\begin{equation}
K \le \frac{1}{<p> L} \simeq \frac{1}{ 3 L T} \sim 0.01,
\end{equation}
provided that the relation $<p> = 3T$ holds approximately. Of course, this can only be a very crude approximation.
Even the definition of the mean free path is not clear for a strongly coupled system as the number of particles
is  not well defined. Another critical point is that
it is by no means obvious that the equilibration 
can be achieved, since it is a process which may take longer
time than the life time of the system under consideration. In general we assume that this can be reached quickly after
the collision, but we need to keep in mind that this is only an assumption.
This discussion, even with a lot of unknowns, gives some elements of knowledge concerning question \ref{rel1}.
\end{proof}

We now assume that the state of local thermodynamic equilibrium is reached (quickly after the collision) at some time $t_0$. Hence, the subsequent evolution 
follows the laws of relativistic hydrodynamics, provided that initial conditions are specified.
The idea is to solve the set of coupled equations (\ref{enrel}) and (\ref{eulerrel}) or (\ref{enrel2}) and (\ref{eulerrel2}),
complemented with an equation of state. For simplicity, we first discuss the case where viscosity effects can be neglected.
Of course, there is a first problem linked to the equation of state, that we do not know. We do not discuss this point
in this document. We always assume a linear equation of the form $P =\gamma \epsilon$.
The problem linked to initial conditions is that this is very unclear how to specify these  conditions in
the experimental context described here. This is question \ref{rel2}.
A complete set of initial conditions involves the three component of the velocity field, the energy and entropy density, at each point
in space and at time $t_0$. We know that
if $t_0$ is short enough, the transverse
components $v_x$ and $v_y$ of the fluid velocity are zero, 
due to the isotropy required by the existence of the local thermodynamic equilibrium.
Indeed, 
isotropy implies that there is no preferred direction, and that
the transverse momentum averaged over a fluid element vanishes. 

Beyond that point, we need effective models in order to define initial conditions.
We discuss the main ideas below.
A complete review can be found in \cite{review}. Also,  it is important to separate precisely the consequences of the hydrodynamic flow from those of the initial and boundary conditions. From that point of view, the effective models need to provide  simplified
pictures which can be qualitatively understood in physical terms. 

One model poses a complete stopping of the initial ions, that release their initial energy
in the volume corresponding to their longitudinal Lorentz-contracted size
in the direction of motion ($z$-axis) of the initial ions times the interaction area. Thus the velocity field
at initial time is zero for the whole volume.
Under the influence of the longitudinal gradient, the system starts expanding and the early expansion 
can be regarded as one-dimensional.
This is known as the Landau's initial conditions.  
In this situation, the one-dimensional expressions of equations (\ref{enrel}) and (\ref{eulerrel}) with zero baryon chemical potential
can be solved. It can be shown that this leads to a Gaussian distribution of the particle rapidity distribution.
As the system expands, the mean free path  increases until $K \sim 1$. 
At this stage, the
particles depart from each other so fast that the collision processes become ineffective. 
The system breaks up and the particles stream away from each other (freeze-out).
We note that this model presents the
simplification  that
the evolution of the system before freeze-out is dominated by the longitudinal motion and thus, 
 the hydrodynamic transverse motion can
be neglected or at least factorized out. 
A refined model for initial conditions has been proposed by Bjorken, based on the experimental fact that
the rapidity distribution of particles is constant in the mid-rapidity region.
This means that the central region is invariant under Lorentz transformations along the direction of motion.
This implies that all quantities of interest characterizing the central region depend only on the
longitudinal proper time $\tau=\sqrt{t^2-z^2}$ and transverse coordinates $x$ and $y$.
Then, following equations (\ref{enrel2}) and (\ref{eulerrel2}) the evolution equation of the the energy density becomes
\begin{equation}
\frac{d\epsilon}{d\tau}=-\frac{\epsilon+P}{\tau}.
\label{eps2}
\end{equation}
Using a simple form $P=\gamma \epsilon$ for the equation of state, it can be shown that the 
energy density is decreasing with the proper time as
\begin{equation}
\epsilon(\tau,x,y)=\epsilon(\tau_0,x,y) \left ( \frac{\tau_0}{\tau}  \right )^{1+\gamma}
\label{enbj}
\end{equation}
Consequently, the temperature is also decreasing with the proper time as 
\begin{equation}
T(\tau,x,y)=T(\tau_0,x,y) \left ( \frac{\tau_0}{\tau}  \right )^{\gamma}.
\end{equation}
Equation (\ref{enbj}) is useful in practice once we define $\epsilon(\tau_0,x,y)$.
This can be done again with an effective model. The most common one is called Glauber model,
which is essentially a way to encode geometrical constraints on the expression of the energy density and similarly on the entropy density \cite{review}.
The general idea of this model is that Woods-Saxon distributions of nuclear matter in colliding heavy ions are
projected on the transverse plane ($x,y$), and the resulting densities projected on this plane together with the
nucleon-nucleon cross section (at the collision energy) are used to calculate the number density
of binary collisions and participants, that have interacted at least once, on the transverse plane.
Then, the energy density $\epsilon(\tau_0,x,y)$ is taken to be proportional to the profile of collisions or participants or to
a linear combination of them. The proportionality constant is a free parameter chosen to reproduce the
observed final particle multiplicity \cite{review}.
Similarly for the entropy density $s(\tau_0,x,y)$.
What is really striking is that the Glauber model, which relies only on the nuclear geometry and straight light propagation
of nucleons at high energy, provides a perfect description of a wide range of spectra of experimental variables.
In experiments, the Glauber model is used to make the link between the experimental variables like the transverse energy)deposited in zero degree calorimeters after the heavy-ion collision and the physical conditions that prevail for this collision,
like its centrality, the initial energy or entropy densities.
This is the answer to  question \ref{rel3}.
The geometry of a given heavy-ion collision is one essential information, linked to the knowledge of initial conditions
\cite{bialas}.
However, for the full interpretation of experimental measurements, geometry alone is obviously not sufficient and needs  
to be completed the hydrodynamical equations, as discussed above. Indeed,
non-central collision will lead to spatial asymmetries of the initial state produced. This is geometry. Then, these asymmetries will be mapped
directly into the final state momentum distribution thanks to the hydrodynamical evolution equations.
This is the dynamics.

Finally, let us discuss briefly the viscosity corrections in the relativistic context
and why they do not change qualitatively the previous discussion.
First, the viscosity of a fluid is related to its ability to return to local thermodynamic equilibrium.
For a medium with strongly interacting constituents at the beginning of the 
hydrodynamical evolution, we can expect small relaxation times, thus
small viscosity effects. However, during the expansion, as the mean free path increases, the coupling
strength is decreased between constituents. Therefore, we can not claim that viscosity effects should be small in average
during the expansion.
Quite generally,  viscosity  can be encoded following a standard strategy by modifying
the inviscid energy-momentum tensor
$T^{\mu\nu}$. Then, the inviscid  expression of the tensor is complemented by
terms depending on the first derivatives of the velocity field. 
These terms correspond to $1/L$ corrections to the inviscid expression, where $L$ is the
typical size of the system. This gives the equivalent to the Navier-Stokes equations in the
non-relativistic case. 
Other terms with second derivatives of the velocity field would correspond to $1/L^2$ corrections,
and thus more complex equations.
However, the global behavior of the energy density as a function of $\tau$ described by the equation (\ref{enbj}) remains 
\cite{review,reviewviscousrel}.

\section{Conclusions and outlook}

In this paper, we have discussed some questions that we think to be important concerning fluid mechanics
in its relativistic limit or not.
We have ranged them in two categories: the long term behavior of the velocity field for non-relativistic fluids, where initial
and boundary conditions are known, and 
the short time evolution of 
relativistic systems, that can be produced in 
high energy heavy-ion collisions at RHIC or LHC experiments.
In this last case, initial and boundary conditions represent the critical points.

For non-relativistic fluids, when we assume smooth initial conditions
of the velocity field
(divergence free, infinitely differentiable, square integrable and with strong decay properties at infinity),
the main question is whether the solution of the three-dimensional incompressible
Navier-Stokes equations can develop a finite time singularity or if a global smooth 
solution exist, which means smooth for all times. This question is also known as the Millennium problem related to Navier-Stokes
equations. Following the arguments that we have developed in this paper, we can think that a way to approach  
the problem would  be to show that it is always possible to construct mathematically
a solution that diverges in a finite time, apart from some specific initial conditions like small initial data where the global regularity is ensured.
Then, global regularity would not exist in general.
We have also shown why the study of the physical vortex dynamics, specifically local vortex stretching mechanisms, could be an interesting approach to the Millennium problem.

Concerning relativistic systems, that can be produced in 
high energy heavy-ion collisions, we have shown that the realization of the local thermodynamic equilibrium, necessary
for any subsequent application of hydrodynamical equations, is already a problematic issue. We have proved
qualitatively why it can be obtained only for
strongly coupled systems. This is not something intuitive. Indeed, we could have thought that the state of 
matter created during the heavy-ion collision  would be  a weakly coupled gas of (almost free) quarks and gluons in thermal equilibrium, due to the property of asymptotic freedom at high energy. However, we have shown that this last statement is 
not correct. Then, comes the next difficult item concerning the initial (and boundary) conditions. Indeed, we there is not direct way 
to determine the initial conditions that prevail at the formation of the state of matter produced quickly after the collision,
as we can only observe hadrons formed in the late stage evolution of the system.
Here, we have shown that the geometry of the collision that can be inferred using for example the Glauber model is an essential input to
the initial conditions. Moreover, it happens that the pure geometrical constraints allow a very good understanding of a lot of
experimental observations. Then, using the equations of fluid mechanics that transformed the initial spatial asymmetries (geometry)
into the final state momentum asymmetries is a way to a global understanding of experimental measurements possible for heavy-ion collisions.


\end{document}